\documentclass[lettersize,journal]{IEEEtran}
\usepackage{amssymb}
\usepackage{amsmath}
\usepackage{cite}
\usepackage{url}
\usepackage{xcolor}
\usepackage{cite,graphicx,amsmath,amssymb}
\usepackage{fancyhdr}
\usepackage{mdwmath}
\usepackage{mdwtab}
\usepackage{caption}
\usepackage{amsthm}
\usepackage{setspace}
\usepackage{hyperref}
\usepackage{algorithm}
\usepackage{algorithmic}
\usepackage{multirow}
\usepackage{makecell}
\usepackage{cuted}
\usepackage{mathtools}
\usepackage{subcaption}
\usepackage{mathrsfs}
\usepackage{bm}
\usepackage{pifont}
\hypersetup{colorlinks=false}
\usepackage{stfloats}
\usepackage{titlesec}
\expandafter\def\expandafter\normalsize\expandafter{%
    \normalsize%
    \setlength\abovedisplayskip{5pt}%
    \setlength\belowdisplayskip{5pt}%
    \setlength\abovedisplayshortskip{1pt}%
    \setlength\belowdisplayshortskip{1pt}%
}

\allowdisplaybreaks[2]

\newtheorem{remark}{Remark}
\newtheorem{theorem}{Theorem}

\newtheorem{lemma}{Lemma}

\newtheorem{corollary}{Corollary}

\newtheorem{proposition}{Proposition}

\allowdisplaybreaks

\makeatletter
\renewenvironment{proof}[1][\proofname]{%
  \par\pushQED{\qed}%
  \normalfont \topsep=2pt \partopsep=0pt
  \trivlist
  \item[\hskip\labelsep
        \itshape #1\@addpunct{.}]\ignorespaces
}{%
  \popQED\endtrivlist\@endpefalse
}
\makeatother
\begin{document}
\title{Center-Fed Pinching Antenna System for Uplink Environment Sensing}
\vspace{-0.5cm}
\author{%
Cong Yu,
Xu Gan,~\IEEEmembership{Member,~IEEE},
Bo Ai,~\IEEEmembership{Fellow,~IEEE},\\
Yuanwei Liu,~\IEEEmembership{Fellow,~IEEE},
Wei Chen,~\IEEEmembership{Senior Member,~IEEE}

\thanks{ Cong Yu, Bo Ai and Wei Chen are with School of Electronic
 and Information Engineering, Beijing Jiaotong University, Beijing 100044,
 China. (e-mail: \{cong\_yu, boai, weich\}@bjtu.edu.cn).
 
 Xu Gan and Yuanwei Liu are with the Department of Electrical and Electronic Engineering, The University of Hong Kong, Hong Kong. (e-mail: \{eee.ganxu, yuanwei\}@hku.hk).
\vspace{-0.5em}
 }}

\maketitle
\pagestyle{empty}
\thispagestyle{empty}
\renewcommand{\algorithmicrequire}{\textbf{Input:}}
\renewcommand{\algorithmicensure}{\textbf{Output:}}
\vspace{-0.5cm}
\begin{abstract}
A center-fed pinching antenna system (C-PASS)-enabled uplink environment sensing framework is proposed. Through the center-fed framework, doubled degrees of freedom is achieved compared to conventional end-fed PASS. Based on this, we consider an uplink sensing scenario, in which a linear inverse model is developed to reconstruct the environment through signals scattered by the environment object. In the proposed framework, the distance between the feed points for stable separation of the received signals is characterized in closed form. Furthermore, Ziv-Zakai bound (ZZB) expressions for the mean-squared reconstruction error are derived for C-PASS and end-fed PASS. Based on these theoretical results, it can be proved that C-PASS achieves a strictly lower reconstruction error bound than conventional PASS for uplink environment sensing. Finally, numerical results validate the accuracy of the derived ZZB expressions and 1) demonstrate that C-PASS provides more stable separation of the received signals, and 2) confirm the consistent performance advantages of C-PASS.
\end{abstract}

\begin{IEEEkeywords}
Center-fed pinching antenna, Received-signal separation, Uplink environment sensing, Ziv-Zakai bound.
\end{IEEEkeywords}
\section{Introduction}
Pinching antenna systems (PASS) have recently attracted considerable attention as a reconfigurable architecture for wireless communications~\cite{pass1,pass2,sen1}. By guiding electromagnetic signals inside a dielectric waveguide and radiating them through movable pinching antennas (PAs), PASS shortens the free-space path and provides spatial diversity along the same structure~\cite{pass2}. These features are useful not only for communication, but also for localization~\cite{sen1}, scattering-profile reconstruction~\cite{sen2}, integrated sensing and communication~\cite{sen3}, and environment sensing~\cite{env}. Among these applications, uplink environment sensing is particularly attractive because the scene can be inferred directly from user transmissions without a dedicated downlink probing stage.

However, most existing PASS sensing studies focus on downlink scenarios~\cite{pass1,pass2,sen1,sen2,sen3}, while uplink studies mainly consider communication metrics such as channel estimation~\cite{sen_up2} and capacity bounds~\cite{sen_up}. Thus, uplink environment sensing with PASS remains underexplored. A key difficulty lies in the conventional end-fed architecture, where the uplink signals captured by multiple PAs propagate along the waveguide and are coherently combined at a single feed port. Consequently, each user-slot block provides only one independent observation, which limits the sensing degrees of freedom (DoF) to 1 and makes the received measurement matrix structurally ill-conditioned~\cite{sen_li}. To address this limitation, the recently proposed center-fed PASS (C-PASS)~\cite{gan2025c} provides a new opportunity. By incorporating tunable T-junction waveguide splitters~\cite{gan2026center}, C-PASS introduces multiple feed points at different physical locations along the waveguide, each connected to an independent RF chain. The forward- and backward-PA-received signals then travel different waveguide distances to each feed point, allowing C-PASS to avoid the structural rank deficiency of conventional PASS and provide multiple feed-point observations within each time slot. Nevertheless, prior C-PASS work has focused exclusively on communication problems, including DoF characterization~\cite{gan2025c}, beamforming design~\cite{gan2026center}, and capacity analysis~\cite{gan2026center2}. Its potential for uplink environment sensing, and the corresponding fundamental sensing limits, remain open.

The main contributions are summarized as follows: i) A C-PASS-enabled uplink environment sensing framework is developed, in which the indoor scene is discretized into binary voxels and the scattered uplink signals are formulated as a linear inverse model for reconstructing the environment. ii) The distance between the feed points for stable separation of the received signals is characterized in closed form. iii) Closed-form Ziv-Zakai bound (ZZB) expressions are derived for both C-PASS and end-fed PASS, revealing that C-PASS achieves a strictly lower reconstruction error bound. Numerical results validate the derived ZZB and confirm that this advantage holds consistently across different numbers of users and time slots.

\section{System Model}
\label{sec:II}
As illustrated in Fig.~\ref{fig:sys}, a C-PASS-enabled uplink environment sensing system is considered in an indoor scenario. A dielectric waveguide is mounted at height $D_h$ along the $y$-axis. Its geometric center is $\boldsymbol{\ell}_c=[x_c,y_c,D_h]^T$, and two feed points are placed symmetrically about the waveguide center at
\begin{equation}
\boldsymbol{\ell}_{c,1} = \left[x_c,\; y_c - \tfrac{d}{2},\; D_h\right]^T, 
    \boldsymbol{\ell}_{c,2} = \left[x_c,\; y_c + \tfrac{d}{2},\; D_h\right]^T,
    \label{eq:cpass_feeds}
\end{equation}
where $d > 0$ denotes the distance between the two feed points, with $y_{c,1} = y_c - \tfrac{d}{2}$ and $y_{c,2} = y_c + \tfrac{d}{2}$. At slot $t\in\{1,\ldots,T\}$, one forward PA and one backward PA are activated at $\boldsymbol{\ell}_t^F=[x_c,\,y_t^F,\,D_h]^T$ and $\boldsymbol{\ell}_t^B=[x_c,\,y_t^B,\,D_h]^T$, where $y_t^F=y_c-d_t^F$ and $y_t^B=y_c+d_t^B$. The system is served by $U$ single-antenna UEs at known positions $\{\boldsymbol{\ell}_u\}_{u=1}^{U}$, where $\boldsymbol{\ell}_u = [x_u, y_u, 0]^T$. The $u$-th user transmits a pilot $s_u(t)$ with power $P_u$, and the orthogonality condition $\mathbb{E}[s_u(t)s_v^*(t)]=\delta_{uv}$ enables matched-filter separation at the receiver for all $u$. It can be assumed that the UE positions and pilots are known, thus the deterministic line-of-sight (LoS) component can be estimated and removed. 

The region of interest (RoI) is a cuboid of dimensions $S_x \times S_y \times D_h$, discretized into $N$ voxels of size $s_x \times s_y \times d_h$. The sensing objective is to determine whether a scatterer is present in each voxel, which is characterized by a binary scene indicator $x_n\in\{0,1\}$, where $x_n = 1$ denotes the presence of a scatterer at voxel center $\boldsymbol{\ell}_n = [\hat{x}_n, \hat{y}_n, \hat{z}_n]^T$ and $x_n = 0$ denotes absence. The complete environmental profile is thus represented by the sparse binary scene vector $\mathbf{x} = [x_1, x_2, \ldots, x_{N}]^T \in \mathbb{R}^{N}$. Then, the goal of environment sensing is to recover $\mathbf{x}$ from uplink received signals. For a scatterer at voxel $n$, the two-hop propagation factor from $u$-th UE to the forward PA at slot $t$ is
\begin{equation}
    \xi_{t,u}^F(n) = \eta^2 \frac{\exp\!\left(-j\beta_0(r_{u,n} + r_{n,t}^F)\right)}{r_{u,n}\, r_{n,t}^F},
    \label{eq:xi_F}
\end{equation}
and similarly for the backward PA with $\xi_{t,u}^B(n)\!\! = \eta^2\!\exp(-\!j\beta_0(r_{u,n}\!+\!r_{n,t}^B))/(r_{u,n} r_{n,t}^B)$, where $r_{u,n}\!\! = \!\!\|\boldsymbol{\ell}_u \!\!-\! \boldsymbol{\ell}_n\|  _2$, $r_{n,t}^F \!\!=\!\! \|\boldsymbol{\ell}_n\!\! -\!\boldsymbol{\ell}_t^F\|_2$, $r_{n,t}^B \!= \!\!\|\boldsymbol{\ell}_n \!\!-\! \boldsymbol{\ell}_t^B\|_2$, $\beta_0 = 2\pi/\lambda$ is the free-space wavenumber, and $\eta$ denotes the path-loss coefficient.
\begin{figure}[!t]
  \centering
  \includegraphics[width=0.8\linewidth]{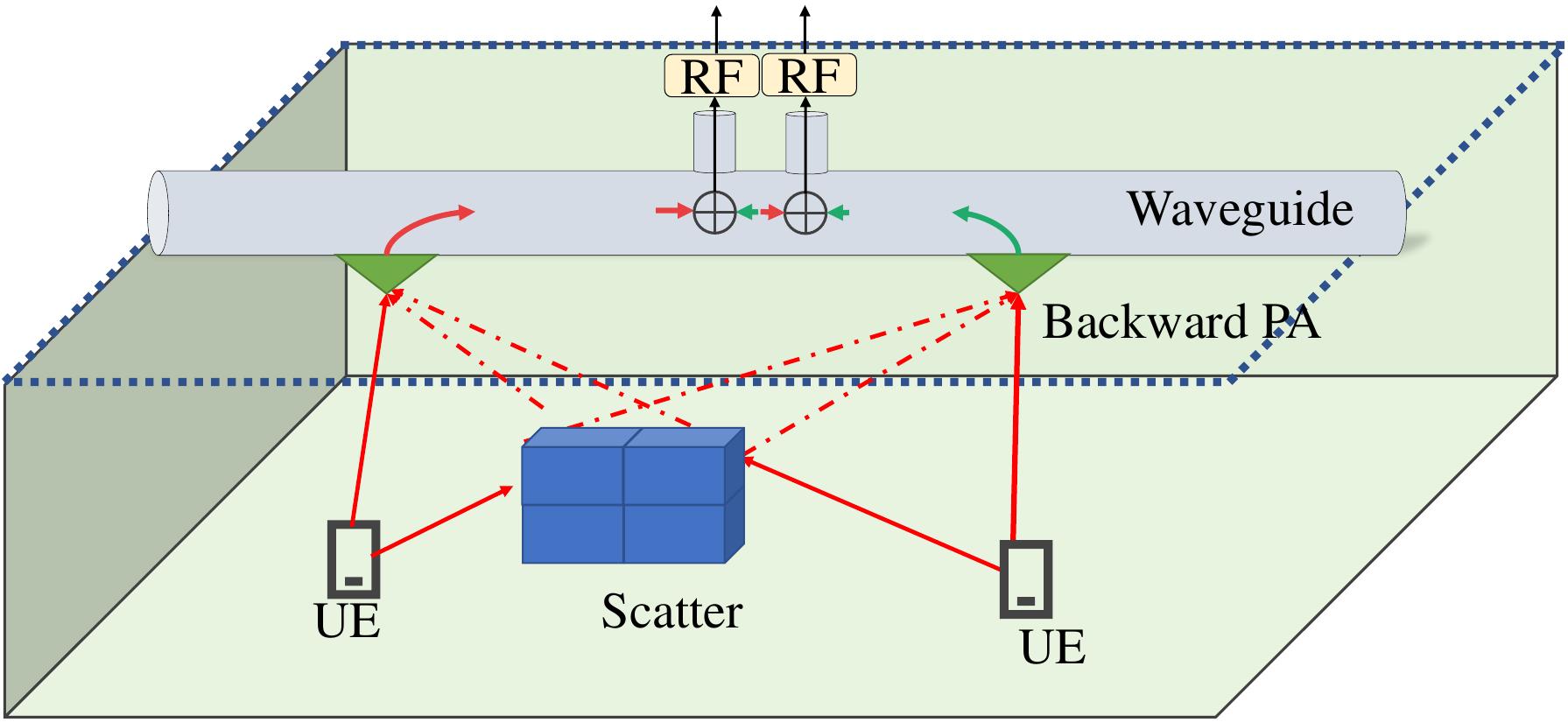}
  \caption{System model of the proposed framework. Solid and dashed red arrows denote direct LoS and scatter-reflected paths, respectively. The red and green arrows inside the waveguide are received signals from different PAs.}
  \label{fig:sys}
\end{figure}
The signals received at the two PAs travel along the waveguide to each feed point with propagation constant $\beta_g\!\!=\!\!2\pi n_e/\lambda$, where $n_e$ is the effective refractive index. Accounting for the free-space scattering response in \eqref{eq:xi_F} and the subsequent waveguide propagation, the composite channel coefficient from $u$-th UE to feed point $m\!\!\in\!\!\{1,2\}$ via voxel $n$ at slot $t$ is
\begin{equation}
\begin{aligned}
    \!\!&A_{t,u,m}(n)\!\! \!\\ &=\!\! \sqrt{P_u}\!\!\left(\!
    e^{-j\beta_g|y_t^F - y_{c,m}|}\,\xi_{t,u}^F(n)
    \!+\!
    e^{-j\beta_g|y_t^B - y_{c,m}|}\,\!\xi_{t,u}^B(n)\!
    \right),
    \label{eq:coeff}
\end{aligned}
\end{equation}
where $y_t^F = y_c - d_t^F$ and $y_t^B = y_c + d_t^B$ are the $y$-coordinates of the forward and backward PAs, respectively.

After matched filtering against the pilot of $u$-th UE, the received signal at feed point $m$ is
\begin{equation}
    \tilde{y}_{m,u}(t) = \sum_{n=1}^{N} A_{t,u,m}(n)\, x_n + \tilde{n}_{m,u}(t),
    \label{eq:obs}
\end{equation}
where $\tilde{n}_{m,u}(t) \sim \mathcal{CN}(0,\sigma^2)$ is additive white Gaussian noise (AWGN). Stacking the scalar received signals $\tilde{y}_{m,u}(t)$ over $m \in \{1,2\}$, $u \in \{1,\ldots,U\}$, and $t \in \{1,\ldots,T\}$ into a vector 
$\tilde{\mathbf{y}} = [\tilde{y}_{1,1}(1),\tilde{y}_{2,1}(1),\ldots, \tilde{y}_{2,U}(T)]^T \in \mathbb{C}^{2UT}$ yields the linear measurement model
\begin{equation}
    \tilde{\mathbf{y}} = \mathbf{A}\,\mathbf{x} + \tilde{\mathbf{n}}.
    \label{eq:global}
\end{equation}
Let $i(m,u,t)=2U(t-1)+2(u-1)+m$ denote the row index associated with feed point $m$, the $u$-th UE, and slot t. Then the measurement matrix satisfies $[A]_{i(m,u,t),n}=A_{t,u,m}(n)$, and the corresponding noise entry is $[\tilde n]_{i(m,u,t)}=\tilde n_{m,u}(t)$.

\section{Performance Analysis for Environment Sensing}

To evaluate the sensing performance advantages of the proposed C-PASS architecture, this section provides a comparative analysis of the sensing performance with conventional end-fed PASS.

\subsection{System Configuration}
For uplink environment sensing, the key distinction between C-PASS and end-fed PASS lies in their signal reception capabilities. For a fair comparison, we consider two architectures configured as follows:

\begin{itemize}
\item \textbf{Center-Fed PASS}: As shown in Section~\ref{sec:II}, at each slot $t\in\{1,\ldots,T\}$, one forward PA and one backward PA are activated. The offsets are set as $d_t^F=d_t^B=\frac{S_y}{2}\frac{t-1}{T-1}$ for $ t=1,\ldots,T.$

\item \textbf{End-Fed PASS}: The end-fed PASS is equipped with two feed points co-located at the left end of the waveguide, i.e., $y_{e,1}=y_{e,2}=0$. The corresponding composite channel coefficients therefore satisfy
$A_{t,u,m}^{\mathrm{PASS}}(n)
\!\!=\!\!\sqrt{P_u}\Big(\!
e^{-j\beta_g|y_t^F|}\xi_{t,u}^F(n)
\!+\!e^{-j\beta_g|y_t^B|}\xi_{t,u}^B(n)
\!\Big)$ for $m\in\{1,2\}$. 
\end{itemize}

\subsection{Received-Signal Separation Analysis}
This subsection analyzes how C-PASS separates the received signals from the forward and backward PAs. In end-fed PASS, the two PA-received signals are coherently combined into one scalar signal in each user-slot block, yielding only one sensing DoF. In contrast, C-PASS collects two feed-point measurements with different mixing coefficients, increasing the DoF to 2. However, a higher DoF alone does not guarantee that the two received signals can be stably separated in the presence of noise. To formalize this separation problem, for each user-slot block, we factor the received measurement matrix from \eqref{eq:global} as $\mathbf{A}_{t,u}=\mathbf{W}_t\mathbf{S}_{t,u}$, with $\mathbf{S}_{t,u}\in\mathbb{C}^{2\times N}$ collecting the two-hop propagation responses from all voxels to the forward and backward PAs, i.e.,
\begin{equation}
\mathbf{S}_{t,u}=\begin{bmatrix}
\xi_{t,u}^F(1) & \cdots & \xi_{t,u}^F(N)\\
\xi_{t,u}^B(1) & \cdots & \xi_{t,u}^B(N)
\end{bmatrix},
\label{eq:response_matrix}
\end{equation}
and $\mathbf{W}_t\in\mathbb{C}^{2\times2}$ maps the forward- and backward-PA responses to the two feed-point measurements:
\begin{equation}
\mathbf{W}_t =
\begin{bmatrix}
e^{-j\beta_g|y_t^F-y_{c,1}|} & e^{-j\beta_g|y_t^B-y_{c,1}|}\\
e^{-j\beta_g|y_t^F-y_{c,2}|} & e^{-j\beta_g|y_t^B-y_{c,2}|}
\end{bmatrix}.
\label{eq:mixing}
\end{equation}

We then quantify the separation stability by the condition number $\kappa=\sigma_{\max}/\sigma_{\min}$ of $\mathbf{W}_t$, since a small $\kappa$ indicates that the two PA-received signals can be stably separated from the two feed-point measurements, whereas a large $\kappa$ means that the two PA-received signals are nearly indistinguishable and noise is strongly amplified during separation. Based on this formulation, the following proposition characterizes the conditioning of $\mathbf{W}_t$ as a function of $d$.

\begin{proposition}
\label{prop:conditioning}
\emph{The condition number of $\mathbf{W}_t^{\mathrm{Cen}}$ depends on $d$ and is given by
\begin{equation}
    \kappa\!\left(\mathbf{W}_t^{\mathrm{Cen}}\right)
    = \left|\cot\!\left(\frac{\beta_g d}{2}\right)\right|.
    \label{eq:cond}
\end{equation}}
\end{proposition}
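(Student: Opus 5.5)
The plan is to compute the four waveguide path lengths in \eqref{eq:mixing} explicitly. I will then factor $\mathbf{W}_t^{\mathrm{Cen}}$ into a fixed $2\times 2$ matrix depending only on $d$, multiplied by a unitary diagonal matrix, and read off the singular values in closed form. The feed points are $y_{c,1}=y_c-\tfrac d2$ and $y_{c,2}=y_c+\tfrac d2$, and the PAs are at $y_t^F=y_c-d_t^F$ and $y_t^B=y_c+d_t^B$. In the regime where the activated PAs lie outside the feed-point segment, i.e.\ $d_t^F,d_t^B\ge \tfrac d2$, the absolute values resolve as
\begin{equation*}
|y_t^F-y_{c,1}|=d_t^F-\tfrac d2,\quad |y_t^F-y_{c,2}|=d_t^F+\tfrac d2,\quad
|y_t^B-y_{c,1}|=d_t^B+\tfrac d2,\quad |y_t^B-y_{c,2}|=d_t^B-\tfrac d2 .
\end{equation*}
Set $\theta=\beta_g d/2$. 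Pulling the common column phases out gives
\begin{equation*}
\mathbf{W}_t^{\mathrm{Cen}}=\begin{bmatrix} e^{j\theta} & e^{-j\theta}\\ e^{-j\theta} & e^{j\theta}\end{bmatrix}
\operatorname{diag}\!\left(e^{-j\beta_g d_t^F},\,e^{-j\beta_g d_t^B}\right).
\end{equation*}

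The second factor is unitary, so the singular values of $\mathbf{W}_t^{\mathrm{Cen}}$ equal those of $\mathbf{M}=\begin{bmatrix} p & p^*\\ p^* & p\end{bmatrix}$ with $p=e^{j\theta}$. This shows at once that $\kappa$ depends on $d$ only and not on $t$, $d_t^F$ or $d_t^B$. The matrix $\mathbf{M}$ is symmetric circulant, so it is diagonalized by the orthonormal vectors $\tfrac{1}{\sqrt2}[1,1]^T$ and $\tfrac{1}{\sqrt2}[1,-1]^T$. Its eigenvalues are $p+p^*=2\cos\theta$ and $p-p^*=2j\sin\theta$. Since $\mathbf{M}$ is normal, its singular values are the moduli of these eigenvalues, namely $2|\cos\theta|$ and $2|\sin\theta|$. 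The ratio of these two gives $|\cot(\beta_g d/2)|$, which is \eqref{eq:cond} (equal to $+\infty$ when $\sin\theta=0$, where $\mathbf{W}_t^{\mathrm{Cen}}$ is singular).

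I expect two points to need care. First, strictly $\kappa=\sigma_{\max}/\sigma_{\min}=\max\{|\cot\theta|,|\tan\theta|\}$. The stated formula $|\cot(\beta_g d/2)|$ is exact on the branch $|\cos\theta|\ge|\sin\theta|$, which is evidently the one the statement intends. On the complementary branch the formula should be read as the ratio $\sigma_{\cos}/\sigma_{\sin}$, i.e.\ the reciprocal of the condition number. I would state this convention explicitly; either way, the conditioning is controlled solely by $\beta_g d/2$. Second, for small offsets, e.g.\ $t=1$ where $d_1^F=d_1^B=0<\tfrac d2$, the absolute values resolve differently. I would treat this by the same factorization: pull out the row and column phases, reduce to a $2\times2$ matrix with unit-modulus entries, and compute $\sigma_{\max}^2,\sigma_{\min}^2$ from the trace $\operatorname{tr}(\mathbf{W}\mathbf{W}^H)=4$ and $|\det\mathbf{W}|$ via $\sigma_{\max,\min}^2=2\pm\sqrt{4-|\det\mathbf{W}|^2}$. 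I would then either check that the same expression results or restrict the proposition to $d\le 2\min_t d_t^{F}$. This case bookkeeping is the main obstacle; the core singular-value computation is routine.
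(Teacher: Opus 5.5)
Your argument is correct in the regime $d_t^F,d_t^B\ge d/2$, and it takes a different route from the paper. The paper never resolves the absolute values. It computes $\det(\mathbf{W}_t^{\mathrm{Cen}})=2j\sin(\beta_g d)e^{-j\beta_g(\cdot)}$, uses the unit-modulus entries to get $\sigma_{\max}^2+\sigma_{\min}^2=\|\mathbf{W}_t^{\mathrm{Cen}}\|_F^2=4$ and $\sigma_{\max}\sigma_{\min}=2|\sin(\beta_g d)|$, and then solves for the pair. You instead strip the unitary column phases and diagonalize the symmetric circulant core. This gives the singular values $2|\cos\theta|$ and $2|\sin\theta|$ together with their singular vectors, and it makes the independence from $t$, $d_t^F$ and $d_t^B$ immediate. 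The trace/determinant route is shorter and works for any unit-modulus $2\times 2$ matrix; it is your proposed fallback. However, solving the quadratic yields only the unordered pair $\{4\cos^2\theta,4\sin^2\theta\}$, and the paper silently assigns $4\cos^2\theta$ to $\sigma_{\max}^2$, as it does again later for $\lambda_{\max}$. Your explicit decomposition exposes this, and your first caveat is right: $\kappa=\max\{|\cot\theta|,|\tan\theta|\}$, so the stated formula holds iff $\cos(\beta_g d)\ge 0$. For example, at $d=\lambda_g/2$ the matrix is singular, yet $|\cot(\beta_g d/2)|=0$. The two expressions agree at the design spacings $d=(1+2S)\lambda_g/4$, where $\kappa=1$.

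On your second caveat, the ``check'' option fails, so the restriction is necessary rather than optional. In general $|\det\mathbf{W}_t^{\mathrm{Cen}}|=2|\sin(\beta_g\Delta_t)|$, where
$\Delta_t=d$ if both PAs lie outside the feed segment,
$\Delta_t=d_t^F+d_t^B$ if both lie inside, and
$\Delta_t=\min\{d_t^F,d_t^B\}+d/2$ otherwise.
Only the first case depends on $d$ alone. Under the paper's own schedule, $d_1^F=d_1^B=0$ makes all four entries of $\mathbf{W}_1^{\mathrm{Cen}}$ equal to $e^{-j\beta_g d/2}$, so the matrix has rank one and $\kappa=\infty$ for every $d$. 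The paper's placeholder $e^{-j\beta_g(\cdot)}$ hides exactly this assumption. You should therefore state $d\le 2\min\{d_t^F,d_t^B\}$ as an explicit hypothesis. It holds for all $t\ge 2$ in the simulations, where $d\approx 17$~mm while the offsets are at least $S_y/(2(T-1))$.
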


\begin{proof}
Substituting~\eqref{eq:cpass_feeds} into~\eqref{eq:mixing}, the determinant is
\begin{equation}
\det\!\left(\mathbf{W}_t^{\mathrm{Cen}}\right) = 2j\sin(\beta_g d)\,e^{-j\beta_g(\cdot)}.
\label{eq:det_cpass}
\end{equation}
Since every entry of $\mathbf{W}_t^{\mathrm{Cen}}$ has unit magnitude, $\|\mathbf{W}_t^{\mathrm{Cen}}\|_F^2=4$, so the singular values satisfy $\sigma_{\max}^2+\sigma_{\min}^2=4$. Moreover, the determinant identity gives $\sigma_{\max}\sigma_{\min}=2|\sin(\beta_g d)|$. Combining the two relations yields~\eqref{eq:cond}.
\end{proof}

Proposition~\ref{prop:conditioning} shows that the condition number of $\mathbf{W}_t^{\mathrm{Cen}}$ is minimized when $d=(1+2S)\lambda_g/4$, $S\in\mathbb{Z}$. At these spacings, $\kappa(\mathbf{W}_t^{\mathrm{Cen}})=1$, so $(\mathbf{W}_t^{\mathrm{Cen}})^H\mathbf{W}_t^{\mathrm{Cen}}$ is proportional to the identity matrix. Equivalently, the two columns of $\mathbf{W}_t^{\mathrm{Cen}}$, corresponding to the forward and backward PA-received signals, are orthogonal and have equal norm. Hence, the two PA-received signals can be separated from the feed-point measurements with minimum sensitivity to noise. By contrast, $\mathbf{W}_t^{\mathrm{End}}$ is rank deficient and $\kappa(\mathbf{W}_t^{\mathrm{End}})\to\infty$ for any PA configuration.

\subsection{Fundamental Performance Limit via ZZB}
\label{31}

In this subsection, we characterize the fundamental sensing accuracy of the proposed framework. Since the sensing task is a discrete detection problem, the Cram\'er-Rao bound (CRB) is ill-suited here, as it applies to unbiased estimators of continuous parameters and provides no meaningful characterization near the recovery threshold. The ZZB, by contrast, remains valid across the full SNR range and directly accounts for the discrete nature of support recovery~\cite{jeong2025zzb}. In the following, a closed-form ZZB expression is derived as a function of the measurement matrix $\mathbf{A}$ and noise variance $\sigma^2$.

To define the ZZB, we restrict the binary scene vector to the $K$-sparse support set
\begin{equation}
\mathcal{X}_K \triangleq \left\{\mathbf{x}\in\{0,1\}^N:\mathbf{1}^T\mathbf{x}=K\right\},
\end{equation}
where $K$ denotes the number of active voxels. Let $\hat{\mathbf{x}}\in\mathcal{X}_K$ denote any estimate of the scene vector. Define
\begin{equation}
J(\hat{\mathbf{x}},\mathbf{x}) \triangleq \frac{1}{2}\|\hat{\mathbf{x}}-\mathbf{x}\|_0 \in \{0,1,\ldots,K\},
\label{eq:support_loss}
\end{equation}
as the number of missed active voxels. Since both vectors have the same $K$,
\begin{equation}
\frac{1}{K}\|\hat{\mathbf{x}}-\mathbf{x}\|_2^2=\frac{2}{K}J(\hat{\mathbf{x}},\mathbf{x}).
\label{eq:nmse_relation}
\end{equation}
So a lower bound on $J/K$ is also a lower bound on the mean squared error (MSE) of the estimated vector. For a fixed $\mathbf{x}\in\mathcal{X}_K$ and $k\in\{1,\ldots,K\}$, define the set of difference vectors
\begin{equation}
\mathcal{D}_k(\mathbf{x}) \triangleq
\left\{
\begin{aligned}
&\sum_{i\in S_+}\mathbf{e}_i-\sum_{j\in S_-}\mathbf{e}_j
\ \bigg|\
|S_+|=|S_-|=k,\\
&\mathbf{x}+\boldsymbol{\delta}\in\mathcal{X}_K
\end{aligned}
\right\}.
\label{eq:delta_family}
\end{equation}
Each vector in $\mathcal{D}_k(\mathbf{x})$ moves $k$ occupied voxels to $k$ empty positions, so $\mathbf{x}$ and $\mathbf{x}+\boldsymbol{\delta}$ differ in exactly $k$ occupied voxels. In the ZZB literature, this set is called the $k$-th shell. The number of such vectors is
\begin{equation}
C_k \triangleq |\mathcal{D}_k(\mathbf{x})|=\binom{K}{k}\binom{N-K}{k},
\label{eq:shell_cardinality}
\end{equation}
which is independent of $\mathbf{x}$. The corresponding ZZB depends on the average minimum error probability over this shell, as
\begin{equation}
\bar P_k(\mathbf{A},\sigma^2)
\triangleq
\frac{1}{|\mathcal{X}_K|C_k}
\sum_{\mathbf{x}\in\mathcal{X}_K}
\sum_{\boldsymbol{\delta}\in\mathcal{D}_k(\mathbf{x})}
P_{\min}(\mathbf{x},\boldsymbol{\delta}),
\label{eq:shell_average}
\end{equation}
where $P_{\min}(\mathbf{x},\boldsymbol{\delta})$ is the minimum error probability of deciding between $\mathbf{x}$ and $\mathbf{x}+\boldsymbol{\delta}$~\cite{jeong2025zzb}.

\begin{proposition}
\label{prop:shell_zzb}
\emph{When $\mathbf{x}$ is uniformly distributed over $\mathcal{X}_K$, every estimator $\hat{\mathbf{x}}\in\mathcal{X}_K$ satisfies
\begin{equation}
\mathbb{E}\!\left[\frac{J(\hat{\mathbf{x}},\mathbf{x})}{K}\right]
\ge
\mathcal{Z}
\triangleq
\frac{1}{K}\sum_{k=1}^{\lceil K/2\rceil}
\mathcal{V}\!\left\{\widetilde P_k(\mathbf{A},\sigma^2)\right\},
\label{eq:global_shell_zzb}
\end{equation}
where $\mathcal{Z}$ denotes the ZZB, $\mathcal{V}\{\cdot\}$ is the valley-filling operator, and for $k=1,\ldots,\left\lceil\frac{K}{2}\right\rceil,$
\begin{equation}
\widetilde P_k(\mathbf{A},\sigma^2)
\triangleq
\max\!\left\{
\bar P_{2k-1}(\mathbf{A},\sigma^2),
\bar P_{2k}(\mathbf{A},\sigma^2)
\right\},
\label{eq:grouped_shell_average}
\end{equation}
with the convention $\bar P_{K+1}(\mathbf{A},\sigma^2)\equiv 0$ when $K$ is odd. Consequently,
\begin{equation}
\mathbb{E}\!\left[\frac{\|\hat{\mathbf{x}}-\mathbf{x}\|_2^2}{K}\right]
\ge2
\mathcal{Z}.
\label{eq:global_shell_nmse}
\end{equation}}
\end{proposition}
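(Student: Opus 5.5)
The plan is to adapt the classical Ziv--Zakai argument (Bellini--Tartara form, with valley filling) to the discrete support-recovery loss $J$, following the shell construction of~\cite{jeong2025zzb}.

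\textbf{Step 1: tail sum.} Since $J\in\{0,1,\ldots,K\}$ is integer valued, we write
\begin{equation}
\mathbb{E}[J]=\sum_{h=1}^{K}\Pr(J\ge h)=\sum_{k=1}^{\lceil K/2\rceil}\big[\Pr(J\ge 2k-1)+\Pr(J\ge 2k)\big],
\end{equation}
with $\Pr(J\ge K+1)=0$ when $K$ is odd. This pairing of consecutive thresholds is the source of the grouping $(2k-1,2k)$ in~\eqref{eq:grouped_shell_average}. It therefore suffices to show, for each $k$,
\begin{equation}
\Pr(J\ge 2k-1)+\Pr(J\ge 2k)\ge \mathcal{V}\{\widetilde P_k\}.
\end{equation}

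\textbf{Step 2: reduction to binary tests.} Consider a pair $(\mathbf{x},\mathbf{x}+\boldsymbol{\delta})$ with $\boldsymbol{\delta}\in\mathcal{D}_h(\mathbf{x})$, so the two hypotheses have $J(\mathbf{x},\mathbf{x}+\boldsymbol{\delta})=h$. By the triangle inequality for the $\ell_0$ distance on $\mathcal{X}_K$, $J(\hat{\mathbf{x}},\mathbf{x})+J(\hat{\mathbf{x}},\mathbf{x}+\boldsymbol{\delta})\ge h$. Hence any estimator induces a binary decision rule: declare $\mathbf{x}$ if $J(\hat{\mathbf{x}},\mathbf{x})<h/2$. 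Its error event forces $J\ge\lceil h/2\rceil$ under the true hypothesis.

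Since the prior is uniform, the two hypotheses have equal probability. This gives
\begin{equation}
\tfrac12\Pr(J\ge\lceil h/2\rceil\mid\mathbf{x})+\tfrac12\Pr(J\ge\lceil h/2\rceil\mid\mathbf{x}+\boldsymbol{\delta})\ge P_{\min}(\mathbf{x},\boldsymbol{\delta}).
\end{equation}
Take $h=2k-1$ and $h=2k$; both give threshold $\lceil h/2\rceil=k$. So $\Pr(J\ge k)$ dominates both $\bar P_{2k-1}$ and $\bar P_{2k}$, hence dominates their maximum $\widetilde P_k$.

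\textbf{Step 3: averaging.} Average over $\mathbf{x}$ uniform on $\mathcal{X}_K$ and $\boldsymbol{\delta}$ uniform on $\mathcal{D}_h(\mathbf{x})$. The map $(\mathbf{x},\boldsymbol{\delta})\mapsto(\mathbf{x}+\boldsymbol{\delta},-\boldsymbol{\delta})$ is a bijection between pairs in shell $h$, and $C_h$ does not depend on $\mathbf{x}$ by~\eqref{eq:shell_cardinality}. Therefore both conditional terms average to the unconditional $\Pr(J\ge k)$, which yields $\Pr(J\ge k)\ge\bar P_h$.

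This is the step I expect to need the most care. The index bookkeeping must be reconciled so that the paired thresholds in Step 1 match the grouped shells: the sum $\sum_k\Pr(J\ge k)$ should be re-expressed over $k\le\lceil K/2\rceil$ with the shells $2k-1,2k$. I would check carefully that the $1/K$ normalization and the factor in front of each term come out exactly as in~\eqref{eq:global_shell_zzb}, rather than off by a factor of 2.

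\textbf{Step 4: valley filling and the MSE bound.} Since $\Pr(J\ge k)$ is nonincreasing in $k$, any lower bound $\widetilde P_k$ can be replaced by its nonincreasing envelope $\mathcal{V}\{\widetilde P_k\}$. This is the standard valley-filling step and remains valid term by term. Summing over $k$ and dividing by $K$ gives~\eqref{eq:global_shell_zzb}. Finally, \eqref{eq:global_shell_nmse} follows immediately from the identity~\eqref{eq:nmse_relation}.
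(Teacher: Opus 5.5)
Your Steps 2--4 reproduce the paper's argument. The estimator-induced binary test forces $J\ge\lceil h/2\rceil$ on error. Averaging over the shell with the bijection $(\mathbf{x},\boldsymbol{\delta})\leftrightarrow(\mathbf{x}+\boldsymbol{\delta},-\boldsymbol{\delta})$ gives $\bar P_h\le\Pr(J\ge\lceil h/2\rceil)$. Since $\lceil(2k-1)/2\rceil=\lceil 2k/2\rceil=k$, this yields $\widetilde P_k\le\Pr(J\ge k)$, and valley filling then gives $\mathcal{V}\{\widetilde P_k\}\le\Pr(J\ge k)$.

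The gap is in Step~1, which is how you connect these bounds to $\mathbb{E}[J]$. You claim it suffices to show $\Pr(J\ge 2k-1)+\Pr(J\ge 2k)\ge\mathcal{V}\{\widetilde P_k\}$ for each $k$. Step~2 does not give this: for $k\ge 2$ both thresholds exceed $k$, so $\Pr(J\ge k)\ge\widetilde P_k$ says nothing about them. In fact the claim is false. Take $\mathbf{A}=\mathbf{0}$, $N=8$, $K=4$. Every $P_{\min}=\tfrac12$, so $\mathcal{V}\{\widetilde P_2\}=\tfrac12$. But $\hat{\mathbf{x}}$ is independent of $\mathbf{x}$ and $K-J$ is hypergeometric, so every estimator has $\Pr(J\ge 3)+\Pr(J\ge 4)=\tfrac{17}{70}+\tfrac{1}{70}=\tfrac{9}{35}<\tfrac12$. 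The grouping $(2k-1,2k)$ in $\widetilde P_k$ does not come from pairing thresholds in the tail sum. It comes from the two shells sharing the threshold $k$, as your own Step~2 shows.

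The repair is to truncate rather than pair:
$\mathbb{E}[J]=\sum_{h=1}^{K}\Pr(J\ge h)\ge\sum_{k=1}^{\lceil K/2\rceil}\Pr(J\ge k)\ge\sum_{k=1}^{\lceil K/2\rceil}\mathcal{V}\{\widetilde P_k\}$.
This simply discards the nonnegative terms with $h>\lceil K/2\rceil$, and it is exactly how the paper concludes. It also removes your factor-of-2 concern: each $k\le\lceil K/2\rceil$ contributes the single term $\Pr(J\ge k)$, matching the unit coefficients in \eqref{eq:global_shell_zzb}. With Step~1 replaced by this chain, the rest of your proposal goes through as written, including \eqref{eq:global_shell_nmse} via \eqref{eq:nmse_relation}.
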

\begin{proof}
    The proof is provided in Appendix~\ref{ap:zzb}.
\end{proof}
Proposition~\ref{prop:shell_zzb} establishes the ZZB for the normalized reconstruction error in terms of the shell-averaged pairwise error probabilities $\bar P_k(\mathbf{A},\sigma^2)$. Therefore, evaluating the ZZB reduces to computing $\bar P_k(\mathbf{A},\sigma^2)$, which is determined by the pairwise distance induced by the measurement matrix.

To derive this quantity, we consider an admissible perturbation $\boldsymbol{\delta}\in\mathcal{D}_k(\mathbf{x})$. Since both $\mathbf{x}$ and $\mathbf{x}+\boldsymbol{\delta}$ remain in $\mathcal{X}_K$, the corresponding equal-prior binary test is
\begin{align}
H_0 &: \tilde{\mathbf{y}} \sim \mathcal{CN}(\mathbf{A}\mathbf{x},\sigma^2\mathbf{I}),\\
H_1 &: \tilde{\mathbf{y}} \sim \mathcal{CN}(\mathbf{A}(\mathbf{x}+\boldsymbol{\delta}),\sigma^2\mathbf{I}).
\end{align}
Because the two hypotheses have the same covariance, the exact maximum a posteriori (MAP) error is $P_{\min}(\mathbf{x},\boldsymbol{\delta})
=
Q\!\left(\frac{\|\mathbf{A}\boldsymbol{\delta}\|_2}{\sqrt{2}\sigma}\right)$,
where the Gaussian $Q$-function is $Q(a)\!\triangleq\!\! \frac{1}{\sqrt{2\pi}}\int_{a}^{\infty}\!e^{-t^2/2}dt\!
=\!\frac{1}{2}\operatorname{erfc}\!\left(\frac{a}{\sqrt{2}}\right).$ Equivalently, with $\mathbf{G}=\mathbf{A}^H\mathbf{A}$,
\begin{equation}
P_{\min}(\mathbf{x},\boldsymbol{\delta})
=
\frac{1}{2}\operatorname{erfc}\!\left(\sqrt{\frac{\boldsymbol{\delta}^T\mathbf{G}\boldsymbol{\delta}}{4\sigma^2}}\right).
\label{eq:pairwise_pe}
\end{equation}
Substituting \eqref{eq:pairwise_pe} into \eqref{eq:shell_average} yields
\begin{equation}
\begin{aligned}
\bar P_k(\mathbf{A},\sigma^2)
=
\frac{1}{|\mathcal{X}_K|C_k}
\sum_{\mathbf{x}\in\mathcal{X}_K}
\sum_{\boldsymbol{\delta}\in\mathcal{D}_k(\mathbf{x})}P_{\min}.
\end{aligned}
\label{eq:shell_average_closed}
\end{equation}
The remaining task is therefore to characterize the pairwise distance $\|\mathbf{A}\boldsymbol{\delta}\|_2^2$.

Under a common propagation model, both architectures share the same steering block $\mathbf{S}_{t,u}$ and differ only through the two-chain mixing matrix $\mathbf{W}_t$. Set $
s_{t,u}^{F}(\boldsymbol{\delta})\triangleq [\mathbf{S}_{t,u}\boldsymbol{\delta}]_1,
s_{t,u}^{B}(\boldsymbol{\delta})\triangleq [\mathbf{S}_{t,u}\boldsymbol{\delta}]_2$
as the perturbation responses along the forward and backward PA paths. Then
\begin{equation}
\|\mathbf{A}\boldsymbol{\delta}\|_2^2
=
\sum_{t=1}^{T}\sum_{u=1}^{U}D_{t,u}(\boldsymbol{\delta}),
\label{eq:block_distance}
\end{equation}
where
\begin{equation}
D_{t,u}(\boldsymbol{\delta})
=
\sum_{m=1}^{2}\left|w_{m1}^{(t)}s_{t,u}^{F}(\boldsymbol{\delta})+w_{m2}^{(t)}s_{t,u}^{B}(\boldsymbol{\delta})\right|^{2}.
\label{eq:generic_distance}
\end{equation}
where $w_{mk}^{(t)}$ is the $(m,k)$-th entry of $\mathbf{W}_t$.

For end-fed PASS, $y_{e,1}=y_{e,2}=0$, substituting \eqref{eq:mixing} into \eqref{eq:generic_distance} yields
\begin{equation}
\psi_{t,u}^{\mathrm{End}}(\boldsymbol{\delta})
\triangleq
e^{-j\beta_g|y_t^F|}s_{t,u}^{F}(\boldsymbol{\delta})
+
e^{-j\beta_g|y_t^B|}s_{t,u}^{B}(\boldsymbol{\delta}),
\end{equation}
\begin{equation}
D_{t,u}^{\mathrm{End}}(\boldsymbol{\delta})
=
2\left|\psi_{t,u}^{\mathrm{End}}(\boldsymbol{\delta})\right|^2.
\label{eq:endfed_distance}
\end{equation}
Each user-slot block therefore contributes only one coherent mixture of the forward and backward perturbation responses.

\begin{figure*}[t]
  \centering
  \begin{subfigure}[t]{0.32\textwidth}
    \centering
    \includegraphics[width=\linewidth]{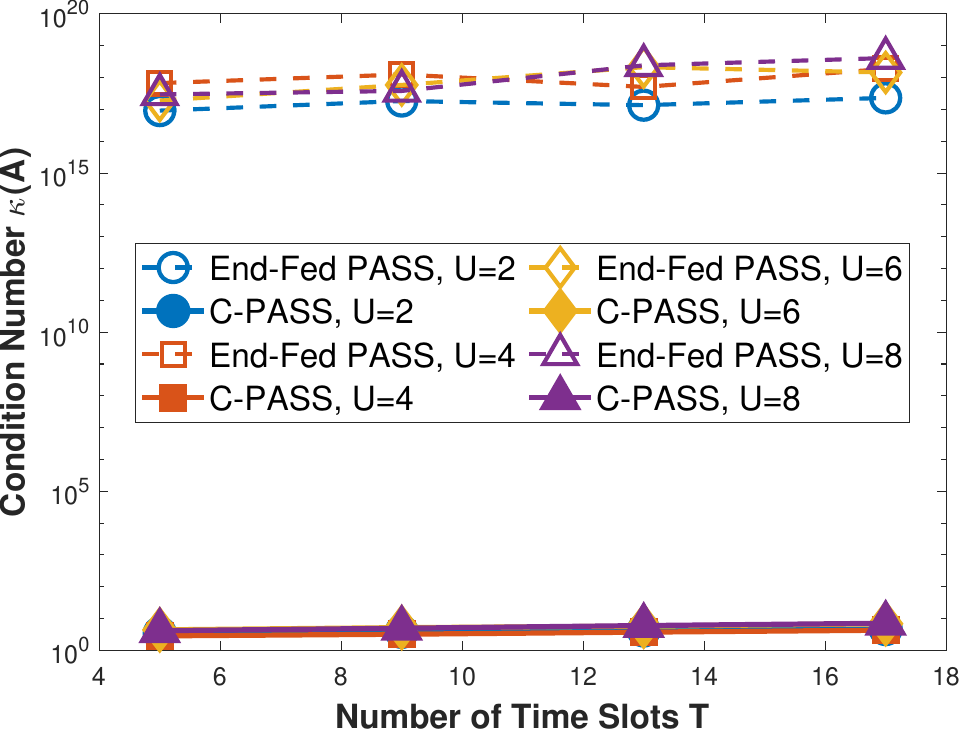}
\caption{Condition number versus number of time slots.}\label{fig:condition_number}
  \end{subfigure}
  \begin{subfigure}[t]{0.32\textwidth}
    \centering
    \includegraphics[width=\linewidth]{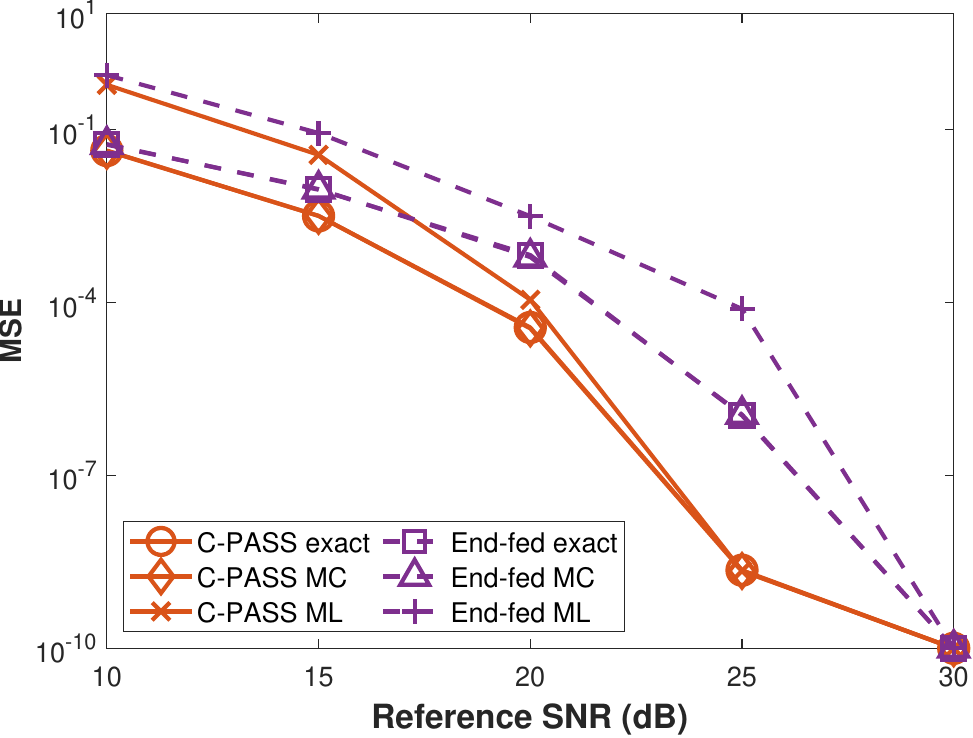}
    \caption{ZZB validation via SNR.}\label{fig:zzb_snr_phase}
  \end{subfigure}
  \begin{subfigure}[t]{0.32\textwidth}
    \centering
    \includegraphics[width=\linewidth]{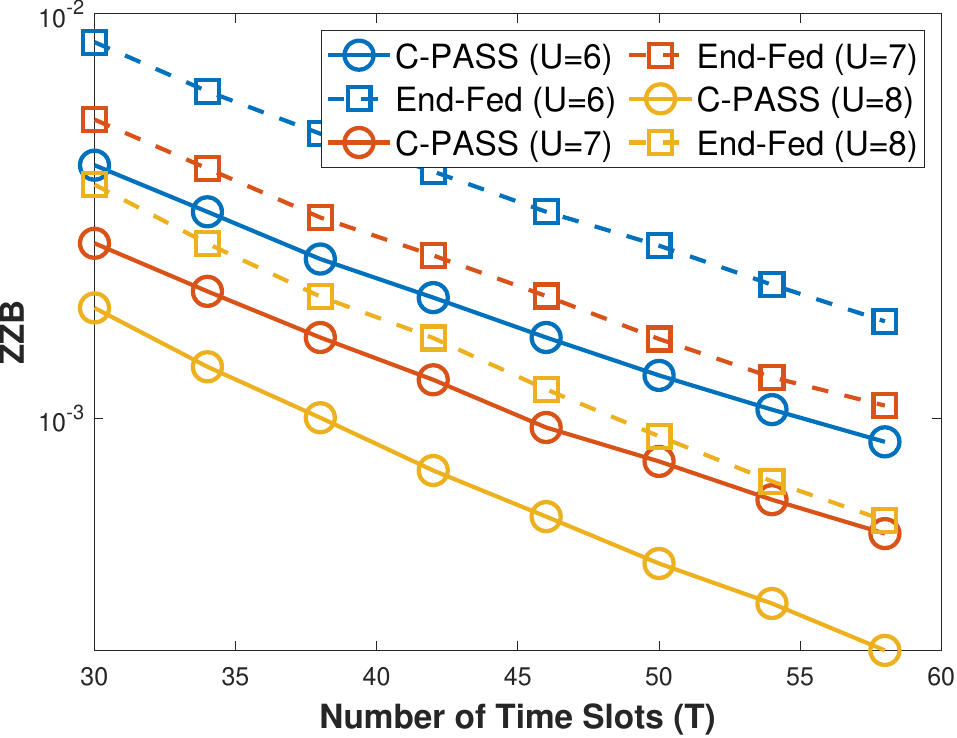}
    \caption{ZZB vs time slots.}\label{fig:zzb_timeslots}
  \end{subfigure}
  \caption{Numerical results of C-PASS vs End-fed PASS.}
  \label{fig:three-subfig}
\end{figure*}

For C-PASS, based on \eqref{eq:mixing}, for $r\in\{1,2\}$,
\begin{equation}
\psi_{t,u,r}^{\mathrm{Cen}}(\boldsymbol{\delta})
\!\triangleq\!
e^{-j\beta_g|y_t^F\!-y_{c,r}|}s_{t,u}^{F}(\boldsymbol{\delta})
\!+\!
e^{-j\beta_g|y_t^B-y_{c,r}|}s_{t,u}^{B}(\boldsymbol{\delta}),
\end{equation}
\begin{equation}
D_{t,u}^{\mathrm{Cen}}(\boldsymbol{\delta})
=
\left|\psi_{t,u,1}^{\mathrm{Cen}}(\boldsymbol{\delta})\right|^2
+
\left|\psi_{t,u,2}^{\mathrm{Cen}}(\boldsymbol{\delta})\right|^2.
\label{eq:cpass_distance_exact}
\end{equation}
Using \eqref{eq:det_cpass}, the eigenvalues of
$(\mathbf{W}_t^{\mathrm{Cen}})^H\mathbf{W}_t^{\mathrm{Cen}}$ are
\begin{equation}
\lambda_{\max}\!\left((\mathbf{W}_t^{\mathrm{Cen}})^H\mathbf{W}_t^{\mathrm{Cen}}\right)=4\cos^2\!\left(\frac{\beta_g d}{2}\right),
\end{equation}
\begin{equation}
\lambda_{\min}\!\left((\mathbf{W}_t^{\mathrm{Cen}})^H\mathbf{W}_t^{\mathrm{Cen}}\right)=4\sin^2\!\left(\frac{\beta_g d}{2}\right).
\label{eq:cpass_lambda}
\end{equation}
Hence
\begin{equation}
D_{t,u}^{\mathrm{Cen}}(\boldsymbol{\delta})
\ge
4\sin^2\!\left(\frac{\beta_g d}{2}\right)
\Big(
|s_{t,u}^{F}(\boldsymbol{\delta})|^2+
|s_{t,u}^{B}(\boldsymbol{\delta})|^2
\Big).
\label{eq:cpass_distance_lb}
\end{equation}

Substituting \eqref{eq:endfed_distance} and \eqref{eq:cpass_distance_exact} into \eqref{eq:shell_average_closed} yields the final architecture-specific closed form. For $\chi\in\{\mathrm{End},\mathrm{Cen}\}$,
\begin{equation}
\begin{aligned}
\bar P_k(\mathbf{A}^{\chi},\sigma^2)
=&
\frac{1}{2|\mathcal{X}_K|C_k}
\sum_{\mathbf{x}\in\mathcal{X}_K}
\sum_{\boldsymbol{\delta}\in\mathcal{D}_k(\mathbf{x})}\\
&\times
\operatorname{erfc}\!\left(
\sqrt{
\frac{
\sum_{t=1}^{T}\sum_{u=1}^{U}
D_{t,u}^{\chi}(\boldsymbol{\delta})
}{4\sigma^2}}
\right).
\end{aligned}
\label{eq:shell_average_arch}
\end{equation}

\begin{remark}
\emph{Equations~\eqref{eq:cpass_distance_exact}, \eqref{eq:cpass_distance_lb}, and \eqref{eq:shell_average_arch} reveal the key ZZB difference between the two architectures. For end-fed PASS, $\lambda_{\min}((\mathbf{W}_t^{\mathrm{End}})^H\mathbf{W}_t^{\mathrm{End}})=0$, so each binary hypothesis pair is evaluated through a single coherent mixture of the forward and backward PA-received signals. In contrast, C-PASS uses two feed-point measurements with different mixing coefficients, which increases the pairwise distance and yields a lower ZZB under the same sensing budget.}
\end{remark}

\section{Numerical Results}
This section provides numerical results to corroborate the analysis and compare C-PASS with conventional end-fed PASS in terms of measurement conditioning and the resulting ZZB under different system settings.

\subsection{Simulation Setup}

Unless otherwise specified, the simulations use a carrier frequency of $f_c=28$~GHz and a waveguide effective refractive index of $n_e=1.4$. The distance between the two feed points of C-PASS is set to $d=9\lambda_g/4$, which satisfies the optimal spacing condition in Proposition~\ref{prop:conditioning}. The region of interest is a $5\times 10\times 3$~m$^3$ cuboid. It is discretized into $0.5\times 0.5\times 0.5$~m$^3$ voxels, resulting in $N=1200$ voxels, and the sparsity ratio is $0.5\%$. A single waveguide is deployed at height $D_h = 3$~m along the $y$-axis and centered at $(2.5,5,3)$ m. The number of UEs is $U = 6$, and each UE transmits with power $P_u = 1$ W using orthogonal pilots. The UE positions are randomly generated within the coverage area at ground level. The noise variance is set to $\sigma^2=-90$~dBm in all experiments.

\subsection{Measurement Matrix Conditioning}
Fig.~\ref{fig:condition_number} plots $\kappa(\mathbf{A})$ versus $T$ for both architectures under $U\in\{2,4,6,8\}$. For end-fed PASS, $\kappa(\mathbf{A}^{\mathrm{End}})$ remains around $10^{16}$ for all tested $T$ and $U$, confirming that the ill-conditioning is structural and cannot be mitigated by additional slots or user diversity. In contrast, C-PASS maintains $\kappa(\mathbf{A}^{\mathrm{Cen}})\lesssim 10$ across all tested configurations. This behavior is consistent with Proposition~\ref{prop:conditioning}: the two feed-point measurements provide independent mixing equations for the PA-received signals, thereby keeping the sensing matrix well conditioned.

\subsection{Validation and Comparison of the Derived ZZB}
Fig.~\ref{fig:zzb_snr_phase} validates Proposition~\ref{prop:shell_zzb} on an exactly enumerable $12$-voxel, $5$-active Bayesian subproblem. Exact denotes the direct evaluation of \eqref{eq:global_shell_zzb}, MC denotes the Monte Carlo approximation, and ML denotes half of the empirical MSE achieved by the maximum-likelihood estimator. The exact and Monte Carlo curves are nearly indistinguishable over the tested SNR range. This confirms the numerical implementation of \eqref{eq:global_shell_zzb}. For both architectures, the ML curve remains above the corresponding ZZB curve throughout. This is consistent with the ZZB being a valid Bayesian lower bound. Moreover, the ZZB and ML curves of C-PASS remain strictly below those of end-fed PASS over the entire tested SNR range. 

\subsection{ZZB Performance versus Number of Time Slots}

Fig.~\ref{fig:zzb_timeslots} plots the eight-curve comparison of the ZZB versus $T$ for $U \in \{6,7,8\}$. C-PASS consistently maintains a lower ZZB than end-fed PASS across all tested time slots. This is consistent with \eqref{eq:block_distance}-\eqref{eq:cpass_distance_lb}, where each additional user-slot block contributes two separated PA-signal components under C-PASS but only one coherent mixture under end-fed PASS, thereby increasing the accumulated pairwise measurement distance and reducing the ZZB.

\section{Conclusion}
A C-PASS-enabled uplink environment sensing framework has been proposed in this letter. For the considered binary voxel-map reconstruction problem, we characterize the distance between the feed points for stable separation of the received signals and derive ZZB expressions for both C-PASS and conventional end-fed PASS. The theoretical analysis shows that C-PASS improves the conditioning of the sensing measurement matrix and yields a lower reconstruction-error bound under the same sensing budget. Numerical results further confirm the advantage of C-PASS across different numbers of users and time slots. These results indicate that center-fed architectures provide a useful design direction for PASS-based sensing, with natural extensions to multi-waveguide deployments and joint sensing-communication designs.

\appendices
\section{Proof of Proposition~\ref{prop:shell_zzb}}
\label{ap:zzb}
Fix $k\in\{1,\ldots,K\}$ and an equiprobable pair $(\mathbf{x},\mathbf{x}')\in\mathcal{X}_K^2$ with
$J(\mathbf{x}',\mathbf{x})=k$.
Let the estimator induce the nearest-neighbor binary test and declare $H_0$ if $J(\hat{\mathbf{x}},\mathbf{x})\le J(\hat{\mathbf{x}},\mathbf{x}')$, and $H_1$ otherwise.
If $H_0$ is true and $J(\hat{\mathbf{x}},\mathbf{x})<k/2$, then the triangle inequality gives
\begin{equation}
J(\hat{\mathbf{x}},\mathbf{x}')
\ge
J(\mathbf{x}',\mathbf{x})-J(\hat{\mathbf{x}},\mathbf{x})
=
k-J(\hat{\mathbf{x}},\mathbf{x})
>
J(\hat{\mathbf{x}},\mathbf{x}),
\end{equation}
Since $J$ is integer-valued, an error of the induced test implies $J(\hat{\mathbf{x}},\mathbf{x})\ge \lceil k/2\rceil$ under $H_0$, and similarly under $H_1$. Thus,
\begin{equation}
\begin{aligned}
P_{\min}(\mathbf{x},\mathbf{x}'-\mathbf{x})
\le&
\frac{1}{2}\Pr\!\left(
J(\hat{\mathbf{x}},\mathbf{x})\ge \left\lceil\frac{k}{2}\right\rceil
\,\middle|\, \mathbf{x}
\right)
\\
&+
\frac{1}{2}\Pr\!\left(
J(\hat{\mathbf{x}},\mathbf{x}')\ge \left\lceil\frac{k}{2}\right\rceil
\,\middle|\, \mathbf{x}'
\right).
\end{aligned}
\label{eq:pmin_pair_tail}
\end{equation}

Write $\mathbf{x}'=\mathbf{x}+\boldsymbol{\delta}$ with
$\boldsymbol{\delta}\in\mathcal{D}_k(\mathbf{x})$, and average
\eqref{eq:pmin_pair_tail} over
$(\mathbf{x},\boldsymbol{\delta})\in\mathcal{X}_K\times\mathcal{D}_k(\mathbf{x})$.
By \eqref{eq:shell_average}, this yields
\begin{equation}
\bar P_k(\mathbf{A},\sigma^2)
\le
\frac{S_{1,k}+S_{2,k}}{2|\mathcal{X}_K|C_k},
\label{eq:avg_split_short}
\end{equation}
with
\begin{align}
S_{1,k}\!
&\triangleq \!\!
\sum_{\mathbf{x}\in\mathcal{X}_K}\!
\sum_{\boldsymbol{\delta}\in\mathcal{D}_k(\mathbf{x})}
\!\!\Pr\!\left(J(\hat{\mathbf{x}},\mathbf{x})\ge \left\lceil\frac{k}{2}\right\rceil \mid \mathbf{x}\right),
\label{eq:s1k_def}
\\
S_{2,k}\!
&\triangleq\!
\sum_{\mathbf{x}\in\mathcal{X}_K}\!
\sum_{\boldsymbol{\delta}\in\mathcal{D}_k(\mathbf{x})}
\!\!\Pr\!\left(J(\hat{\mathbf{x}},\mathbf{x}+\boldsymbol{\delta})\ge \left\lceil\frac{k}{2}\right\rceil
\mid \mathbf{x}+\boldsymbol{\delta}\right).
\label{eq:s2k_def}
\end{align}

The first term is independent of $\boldsymbol{\delta}$, and the second term follows by the bijection $(\mathbf{x},\boldsymbol{\delta})\leftrightarrow(\mathbf{x}+\boldsymbol{\delta},-\boldsymbol{\delta})$. Hence, by \eqref{eq:grouped_shell_average} and \eqref{eq:avg_split_short}-\eqref{eq:s2k_def}, for $k=1,\ldots,\left\lceil\frac{K}{2}\right\rceil$, we have
\begin{equation}
\bar P_k(\mathbf{A},\sigma^2)
\le
\Pr\!\left(J(\hat{\mathbf{x}},\mathbf{x})\ge \left\lceil\frac{k}{2}\right\rceil\right),
\label{eq:tail_shell_lower}
\end{equation}

Since $\Pr(J(\hat{\mathbf{x}},\mathbf{x})\ge k)$ is nonincreasing in $k$, the valley-filling operator preserves the above inequality, i.e.,
\begin{equation}
\mathcal{V}\!\left\{\widetilde P_k(\mathbf{A},\sigma^2)\right\}
\le
\Pr\!\left(J(\hat{\mathbf{x}},\mathbf{x})\ge k\right).
\end{equation}
Thus, we obtain
\begin{equation}
\mathbb{E}\!\left[J(\hat{\mathbf{x}},\mathbf{x})\right]
\ge
\sum_{\ell=1}^{\lceil K/2\rceil}
\mathcal{V}\!\left\{\widetilde P_\ell(\mathbf{A},\sigma^2)\right\}.
\label{eq:grouped_sum_lb}
\end{equation}
Dividing both sides of \eqref{eq:grouped_sum_lb} by $K$ gives
\eqref{eq:global_shell_zzb}. Then, taking expectations on both sides and applying \eqref{eq:nmse_relation} and \eqref{eq:global_shell_zzb} yields
\begin{equation}
\mathbb{E}\!\left[\frac{\|\hat{\mathbf{x}}-\mathbf{x}\|_2^2}{K}\right]
=
2\,\mathbb{E}\!\left[\frac{J(\hat{\mathbf{x}},\mathbf{x})}{K}\right]
\ge
2\,\mathcal{Z},
\end{equation}
which is exactly \eqref{eq:global_shell_nmse}.
\vspace{-0.5cm}

\bibliographystyle{IEEEtran}

% Loading bibliography database
\bibliography{cas_refs}

\end{document}